\theoremstyle{plain}
\newtheorem{thm}{Theorem}[section]
\newtheorem{prop}{Proposition}[section]
\begin{document}

\title{On the age-, time- and migration dependent dynamics of diseases}
\author{Ralph Brinks\\
Institute for Biometry and Epidemiology\\German Diabetes Center\\
Duesseldorf, Germany}
\date{}
\maketitle

\begin{abstract}
This paper generalizes a previously published differential
equation that describes the relation between the age-specific
incidence, remission, and mortality of a disease with its
prevalence. The underlying model is a simple compartment model
with three states (illness-death model). In contrast to the former
work, migration- and calendar time-effects are included. As an
application of the theoretical findings, a hypothetical example of
an irreversible disease is treated.
\end{abstract}

\emph{Keywords:} Incidence; Remission; Mortality; Prevalence;
Illness-Death Model; Compartment model; Epidemiology.

\section{Introduction}\label{intro}
With a view to basic epidemiological parameters such as incidence,
prevalence and mortality of a disease, it has been proven useful to
consider simple illness-death models \cite{Kei90} as shown in
Figure \ref{fig:CompModel}. Depending on the context, sometimes
these are referred to as state models or compartment models. Here
we consider three states: Normal or non-diseased with number of
people denoted as $S$ (susceptible), the diseased state with
number $C$ (cases) and the death state.

\begin{figure}[ht]
\centering
\includegraphics[keepaspectratio,width=0.85\textwidth]{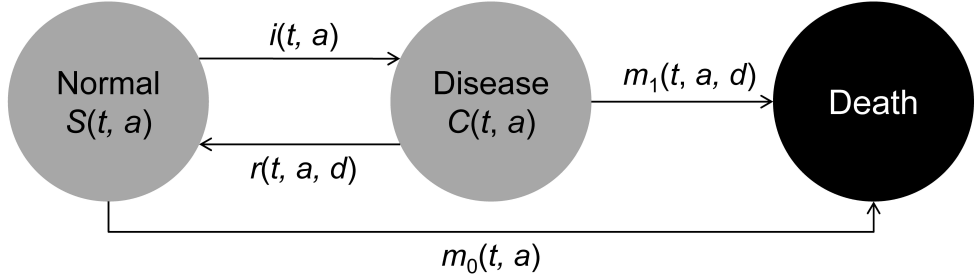}
\caption{Simple illness-death model} \label{fig:CompModel}
\end{figure}

\bigskip

The transition intensities between the states henceforth are
denoted with the symbols as in Figure \ref{fig:CompModel}:
incidence $i$, remission $r$ and mortality rates\footnote{The
expressions \emph{rate} and \emph{density} are synonymously used
in this article.} $m_0$ and $m_1$. In general, the intensities
depend on calendar time $t$, age $a$ and sometimes also on the
duration $d$ of the disease.

Models of this kind are quite common, see for example
\cite{Kei90}, \cite{Kei91} or the text book \cite{Kal02}. Murray
and Lopez (\cite{Mur94} and \cite{Mur96}) have considered such a
compartment model with rates being independent from calendar time
$t$ and duration $d$. In the context of the \emph{Global Burden of
Disease} study of the \emph{World Health Organization} they used
following system of ordinary differential equations (ODEs) to
describe the transitions between the three states:

\begin{equation}\label{eq:Murray}
\begin{split}
    \frac{\mathrm{d} S}{\mathrm{d} a} &= - \left ( i + m_0 \right ) \cdot S  + r \cdot C\\
    \frac{\mathrm{d} C}{\mathrm{d} a} &= i \cdot S - (m_1 + r) \cdot C.\\
\end{split}
\end{equation}

By this system the changes in the numbers of the non-diseased and
diseased persons aged $a$ are related to the intensities as in
Figure \ref{fig:CompModel}. Age plays here the role of temporal
progression. This homogeneous linear system of ODEs looks
relatively harmless, but is limited due to its heavy assumptions.
By an easy calculation it can be shown that Eq. \eqref{eq:Murray}
implies the population being stationary. Let $N(a) := S(a) + C(a)$
denote the total number of persons alive in the population aged
$a$. For $a \in [0, \omega]$ with $N(a) > 0$ define the
age-specific prevalence
\begin{equation}\label{eq:preval}
p(a) := \frac{C(a)}{C(a) + S(a)}.
\end{equation}

\noindent Then from Eq. \eqref{eq:Murray} it follows
\begin{align*}
\frac{\mathrm{d} N}{\mathrm{d} a} &= \frac{\mathrm{d} S}{\mathrm{d} a} + \frac{\mathrm{d} C}{\mathrm{d} a}\\
                                  &= -m_0 \cdot S - m_1 \cdot C\\
                                  &= - N \cdot \left [ (1-p) \cdot m_0 + p\cdot m_1 \right ].
\end{align*}
The term $(1-p) \cdot m_0 + p\cdot m_1$ is the overall mortality
$m$ in the population. Hence, it holds $\tfrac{\mathrm{d}
N}{\mathrm{d} a} = - m \cdot N$, which is the defining equation of
a stationary population, \cite{Pre82}. Although the model of
a stationary population is widely used in demography, real
populations merely are stationary. Moreover, the inclusion of the
values $S$ and $C$ is disturbing. It would be better if Eq.
\eqref{eq:Murray} could be expressed in terms of the age-specific
prevalence \eqref{eq:preval}, what indeed can be achieved. In
\cite{Bri11} it has been shown, that system \eqref{eq:Murray} can
be transformed into the following one-dimensional ODE of Riccati type:

\begin{equation}\label{eq:Brinks}
\frac{\mathrm{d} p}{\mathrm{d} a} = (1-p) \cdot \bigl ( i - p
\cdot \left (m_1 - m_0 \right ) \bigr ) - r \cdot p.
\end{equation}

The importance of Eq. \eqref{eq:Murray} and \eqref{eq:Brinks} is
obvious. For given incidence-, remission- and mortality-rates plus
an initial condition, the age profile of the numbers of patients
and the prevalence is uniquely determined, respectively. To state
it clearly, the ``forces'' incidence, remission and mortality
uniquely prescribe the prevalence - not only qualitatively but in
these quantitative terms. This is called the \emph{forward
problem}: we infer from the causes -- the forces -- to the effect
-- the numbers of diseased or the prevalence, respectively. If in
the scalar Riccati ODE \eqref{eq:Brinks} the age-profiles of the
prevalence, mortality and remission are known, one can directly
solve Eq. \eqref{eq:Brinks} for the incidence. This is the
\emph{inverse problem} -- we conclude from the effect to the
cause. This allows, for example, cross-sectional studies being
used for incidence estimates, where otherwise lengthy follow-up
studies are needed. For an example on real data, see \cite{Bri11}.
Recently, it has been proven that the inverse problem is ill-posed
\cite{Bri12}.

\bigskip

The article is organized as follows: In the next section Eq.
\eqref{eq:Brinks} is generalized allowing dependency on calendar
time and migration. The central result is a partial differential
equation (PDE). Similar to the ODE, in the general case there is a
forward and an inverse problem for the PDE, too. These are
analyzed in a simulated register data of a hypothetical chronic
disease in the section thereafter. Finally, the results are summed
up.

\section{General equation of disease dynamics}\label{sec:1}
In this section the simple illness-death model of Figure
\ref{fig:CompModel} is generalized. The rates $i, r, m_0$ and
$m_1$ henceforth depend on age $a$ and calendar time $t$, but are
assumed to be independent from the duration $d$. Furthermore, let
the numbers of the non-diseased $S(t, a)$ and diseased persons
$C(t, a)$ aged $a$ at time $t$ be non-negative and partially
differentiable. Define $N(t, a) := S(t, a) + C(t, a)$.
Additionally, let $\sigma(t, a)$ and $\gamma(t, a)$ denote those
proportions of $N(t, a)$, such that $\sigma(t, a) \cdot N(t, a)$
and $\gamma(t, a) \cdot N(t, a)$ are the net migration rates of
non-diseased and diseased persons aged $a$ at time $t$,
respectively:

\begin{equation}\label{eq:Big}
\begin{split}
    \left ( \frac{\partial}{\partial t} + \frac{\partial}{\partial a} \right ) S &= \sigma \cdot N - \left ( i + m_0 \right ) \cdot S  + r \cdot C\\
    \left ( \frac{\partial}{\partial t} + \frac{\partial}{\partial a} \right ) C &= \gamma \cdot N + i \cdot S - (m_1 + r) \cdot C.\\
\end{split}
\end{equation}

\noindent After introducing the age-specific prevalence $p(t, a)$
in year $t$, $$p(t, a) := \frac{C(t, a)}{C(t, a) + S(t, a)},$$ for
$(t, a) \in D := \{(t, a) \in [0, \infty)^2 ~| C(t, a) \ge 0,
~S(t, a) \ge 0, C(t, a) + S(t, a) > 0 \}$ the system
\eqref{eq:Big} can be transformed into an equation similar to
\eqref{eq:Brinks}:

\begin{thm}
Let $S(t, a)$ and $C(t, a)$ be given by Eq. \eqref{eq:Big}, then
$p(t, a)$ is partially differentiable in $D$ and it holds
\begin{equation}\label{eq:Brinks2}
   \left ( \frac{\partial}{\partial a} + \frac{\partial}{\partial t} \right ) p =
   (1-p) \left [ i - p (m_1 - m_0) \right ] - r p + \mu,
\end{equation}
where $\mu := \gamma (1-p) - p \sigma$ describes the impact of
migration.
\end{thm}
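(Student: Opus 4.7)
The plan is to treat $L := \tfrac{\partial}{\partial t} + \tfrac{\partial}{\partial a}$ as a first-order linear differential operator (a directional derivative along the characteristics $a - t = \mathrm{const}$) and compute $Lp$ via the quotient rule, then substitute the evolution equations for $S$ and $C$ from \eqref{eq:Big}. The partial differentiability of $p$ on the open set $D$ follows immediately from that of $S$ and $C$ together with $N = S + C > 0$, so the main work is algebraic.

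First I would add the two equations of \eqref{eq:Big} to obtain a transport equation for $N$, namely $LN = (\sigma + \gamma) N - m_0 S - m_1 C$, and rewrite this as
\begin{equation*}
\frac{LN}{N} = (\sigma + \gamma) - (1-p) m_0 - p\, m_1,
\end{equation*}
using $S/N = 1-p$ and $C/N = p$. Similarly, dividing the second equation of \eqref{eq:Big} by $N$ yields
\begin{equation*}
\frac{LC}{N} = \gamma + i(1-p) - (m_1 + r)\, p.
\end{equation*}

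Next, since $p = C/N$, the quotient rule for the operator $L$ gives $Lp = LC/N - p\, LN/N$. Substituting the two expressions above and expanding, I would collect the migration contributions $\gamma - p(\sigma+\gamma) = \gamma(1-p) - p\sigma = \mu$ separately, the incidence/remission contributions $i(1-p) - rp$ separately, and the mortality contributions
\begin{equation*}
-m_1 p + p(1-p)m_0 + p^2 m_1 = -p(1-p)(m_1 - m_0).
\end{equation*}
Combining these three blocks gives exactly $(1-p)\bigl[i - p(m_1-m_0)\bigr] - rp + \mu$, as claimed.

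There is no real obstacle beyond careful bookkeeping: the identity is essentially the chain rule for $C/(S+C)$, and the migration terms $\sigma N$ and $\gamma N$ enter linearly, so they pass through the computation cleanly and reassemble into $\mu$. The only point that deserves mention in the write-up is the justification that $L$ commutes with division by $N$ on $D$, which is immediate from $N > 0$ there.
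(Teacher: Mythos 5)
Your proposal is correct and follows exactly the paper's own route: the paper's proof is precisely ``apply the quotient rule to $p = C/(C+S)$ and use Eq.~\eqref{eq:Big}'', which is what you carry out (your bookkeeping of the mortality block as $-p(1-p)(m_1-m_0)$ and of the migration block as $\mu$ checks out). No further comment is needed.
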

\begin{proof}
Follows from applying the quotient rule to $p(t, a) = \tfrac{C(t,
a)}{C(t, a) + S(t, a)}$ and using \eqref{eq:Big}.
\end{proof}

Obviously, if the incidence- and mortality rates do not depend on
the calendar time $t$, then from Eq. \eqref{eq:Brinks2} with $\mu
\equiv 0$ it follows \eqref{eq:Brinks}. Hence, Eq.
\eqref{eq:Brinks} does not depend on the stationary population
assumption.

\bigskip

For applications in epidemiology it is important that solutions of
Eq. \eqref{eq:Brinks2} are \emph{meaningful}, i.e. $p(t, a) \in
[0, 1]$ for all $(t, a) \in D$. Therefor we note:
\begin{thm}\label{t:equi}
For all $(t, a) \in D$ following statements are equivalent:
\begin{enumerate}
    \item [(1)] $p(t, a) = \tfrac{C(t, a)}{S(t, a) + C(t, a)}$ is a solution of
    Eq. \eqref{eq:Brinks2}.
    \item [(2)] $S(t, a) = \left ( 1-p(t, a) \right ) \cdot N(t, a)$ and
    $C(t, a) = p(t, a) \cdot N(t, a)$ are solutions to Eq.
    \eqref{eq:Big}.
\end{enumerate}
\end{thm}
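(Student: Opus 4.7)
The plan is to exploit the bijective change of variables $(S, C) \leftrightarrow (p, N)$ given on $D$ by $p = C/(S+C)$, $N = S+C$, $S = (1-p)N$, $C = pN$. Under this substitution, the algebraic identities in statement (2) are automatic, so (1) and (2) are really two ways of expressing an equivalence between the system \eqref{eq:Big} for $(S, C)$ and the PDE \eqref{eq:Brinks2} for $p$. The direction $(2) \Rightarrow (1)$ is precisely Theorem 2.1, whose proof is the quotient rule applied to $p = C/(S+C)$ along the characteristic operator $\partial_t + \partial_a$, combined with the right-hand sides of \eqref{eq:Big}.

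For the converse $(1) \Rightarrow (2)$, I would run the calculation in reverse. Starting from $S = (1-p)N$ and $C = pN$, I would apply the product rule to express $(\partial_t + \partial_a)S$ and $(\partial_t + \partial_a)C$ in terms of $(\partial_t + \partial_a)p$ and $(\partial_t + \partial_a)N$, and then substitute the Riccati PDE \eqref{eq:Brinks2} for the former and the transport equation
\[
(\partial_t + \partial_a) N = \bigl[ \sigma + \gamma - (1-p)m_0 - p m_1 \bigr] N
\]
for the latter. This transport equation is obtained by summing the two equations in \eqref{eq:Big} and rewriting the right-hand side using $S = (1-p)N$ and $C = pN$. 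After regrouping, the migration term $\mu = \gamma(1-p) - p\sigma$ from \eqref{eq:Brinks2} combines with the $(\sigma + \gamma) N$ contribution from $N$ to produce exactly $\sigma N$ in the $S$-equation and $\gamma N$ in the $C$-equation, while the incidence, remission and mortality terms realign to match \eqref{eq:Big} by construction.

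The main obstacle is conceptual rather than technical: the scalar PDE \eqref{eq:Brinks2} does not by itself determine $N$, so $(1) \Rightarrow (2)$ has to be read as a statement about the same underlying pair $(S, C)$ whose sum $N$ is governed by the transport equation above, which is implicit on both sides of the equivalence. Once this is acknowledged, the remaining work is routine but careful bookkeeping to verify that the product-rule expressions assemble into the right-hand sides of \eqref{eq:Big}; no further analytical ideas are needed beyond tracking the $\mu$-term through the cancellations.
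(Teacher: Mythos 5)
Your proposal is correct and follows essentially the same route as the paper, whose entire proof is the single sentence that the claim ``follows by inserting the expressions into the PDEs''; your $(2)\Rightarrow(1)$ direction is the quotient-rule computation of Theorem 2.1 and your $(1)\Rightarrow(2)$ direction is the same substitution run in reverse. Your additional observation --- that the scalar equation \eqref{eq:Brinks2} cannot by itself recover the two-dimensional system \eqref{eq:Big}, so the converse only makes sense once the transport equation $\left(\tfrac{\partial}{\partial t}+\tfrac{\partial}{\partial a}\right)N=\bigl[\sigma+\gamma-(1-p)m_0-p\,m_1\bigr]N$ for $N=S+C$ is taken as given on both sides --- is precisely the point the paper's one-line proof leaves implicit, and your product-rule bookkeeping with the cancellation of the $\mu$-term is the right way to close that gap.
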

\begin{proof}
This follows by inserting the expressions into the PDEs.
\end{proof}
By Theorem \ref{t:equi} a solution $p(t,a)$ of Eq.
\eqref{eq:Brinks2} can be written as $p(t, a) =
\tfrac{C(t,a)}{N(t,a)}$ with $N(t,a) = S(t,a) + C(t,a)$. For
$(t,a) \in D$ this implies $p(t,a) \in [0, 1].$

\bigskip

The migration term $\mu$ will be analyzed further now. Let
$\varphi := \sigma + \gamma$ be the overall migration rate. We
split all migration rates $f, ~f\in \{\varphi, \sigma, \gamma\}$
into a positive part $f_+ \ge 0$ (immigration) and a negative part
$f_- \ge 0$ (emigration):
$$f = f_+ - f_-, ~\text{for} ~f\in \{\varphi, \sigma, \gamma\}.$$

\noindent Moreover, for $\varphi_-(t, a) > 0$ define $p^{(m)}_-(t,
a) := \tfrac{\gamma_-(t, a)}{\varphi_-(t, a)}$ the prevalence of
the disease in the emigrants and for $\varphi_+(t, a)
> 0$ define $p^{(m)}_+(t, a) := \tfrac{\gamma_+(t, a)}{\varphi_+(t, a)}$
the prevalence in the immigrants.

\begin{prop}\label{prop:mu}
With the notations as above it holds
\begin{equation*}
\mu(t, a) = \left\{%
\begin{array}{ll}
    ~~\varphi_+(t, a) \cdot p^{(m)}_+(t, a) & \\
    - \varphi_-(t, a) \cdot p^{(m)}_-(t, a) - \varphi(t, a) \cdot p(t, a), & \hbox{for $\varphi_-(t, a), \varphi_+(t, a) > 0$;} \\
    ~~\varphi_+(t, a) \cdot \left [ p^{(m)}_+(t, a) - p(t, a) \right ], & \hbox{for $\varphi_-(t, a) = 0, \varphi_+(t, a) > 0$;} \\
    -\varphi_-(t, a) \cdot \left [ p^{(m)}_-(t, a) - p(t, a) \right ], & \hbox{for $\varphi_-(t, a) > 0, \varphi_+(t, a) = 0$;} \\
    ~~~~~~~~0, & \hbox{for $\varphi_-(t, a) = \varphi_+(t, a) = 0$.} \\
\end{array}%
\right.
\end{equation*}
\end{prop}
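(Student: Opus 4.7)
The plan is to first rewrite $\mu$ in a form that exposes the immigration/emigration split, and then read off each of the four branches by elementary substitution. Starting from the definition $\mu = \gamma(1-p) - p\sigma$, collecting the $p$-terms and using $\varphi = \sigma + \gamma$ gives the clean identity
\[
\mu = \gamma - p\,\varphi.
\]
Inserting the decompositions $\gamma = \gamma_+ - \gamma_-$ and $\varphi = \varphi_+ - \varphi_-$ yields the master formula
\[
\mu = (\gamma_+ - p\,\varphi_+) - (\gamma_- - p\,\varphi_-),
\]
from which every case of the proposition follows.

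In the generic case $\varphi_+(t,a) > 0$ and $\varphi_-(t,a) > 0$, both ``migration prevalences'' $p^{(m)}_{\pm}$ are defined, so the definitions give $\gamma_\pm = p^{(m)}_\pm\,\varphi_\pm$. Substituting into the master formula and rewriting $-p(\varphi_+ - \varphi_-) = -p\varphi$ produces the first branch. For the mixed cases, the crucial step is to interpret $\varphi_\pm$ as the gross immigration and emigration rates of the whole population, so that $\varphi_\pm = \sigma_\pm + \gamma_\pm$. Then $\varphi_-(t,a) = 0$ forces $\sigma_- = \gamma_- = 0$; the right parenthesis in the master formula vanishes, and using $\gamma_+ = p^{(m)}_+\varphi_+$ one reads off $\mu = \varphi_+\bigl(p^{(m)}_+ - p\bigr)$. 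The case $\varphi_+(t,a) = 0$ is fully symmetric, yielding $\mu = -\varphi_-\bigl(p^{(m)}_- - p\bigr)$. When both $\varphi_+$ and $\varphi_-$ vanish, all four component rates $\sigma_\pm, \gamma_\pm$ are zero, whence $\sigma = \gamma = 0$ and trivially $\mu = 0$.

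The main obstacle is not computational --- each branch is a one-line substitution --- but interpretational: one has to justify the identification $\varphi_\pm = \sigma_\pm + \gamma_\pm$ rather than reading $\varphi_\pm$ merely as the abstract positive and negative parts of the net rate $\varphi$. Without that convention, $\varphi_- = 0$ would only mean that total emigration is dominated by immigration, not that no subgroup emigrates, and the mixed cases would collapse. Once the physical reading of the decomposition is fixed, the proposition reduces to the master identity together with the case-specific substitution $\gamma_\pm = p^{(m)}_\pm\,\varphi_\pm$.
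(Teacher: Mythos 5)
Your proof is correct and follows essentially the same route as the paper's, which simply reduces $\mu$ to $\gamma - \varphi\, p$ and then ``splits this expression into positive and negative parts.'' Your explicit remark that the mixed cases require reading $\varphi_\pm = \sigma_\pm + \gamma_\pm$ as gross immigration and emigration flows (rather than as the abstract positive and negative parts of the net rate $\varphi$) is a genuine clarification of a point the paper's one-line proof leaves implicit; without that convention, $\varphi_-(t,a) = 0$ would not force $\gamma_-(t,a) = 0$, and the second and third branches of the case distinction would not follow.
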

\begin{proof}
For all $(t, a) \in D$ it holds $\mu = \gamma - \varphi \cdot p.$
By splitting this expression into positive and negative parts, the
Proposition follows.
\end{proof}

With the assumption that the prevalence of those aged $a$ at time
$t$ who immigrate is the same of those who emigrate, say
$p^{(m)}(t, a)$, then it holds
\begin{equation}
    \mu = \varphi \left ( p^{(m)} - p \right ).
\end{equation}

Hence, if the prevalence $p^{(m)}$ of the migrants is the same as
of those who stay, $p^{(m)} \equiv p$, the change in prevalence $(
\tfrac{\partial} {\partial a} + \tfrac{\partial}{\partial t} ) p$
does not depend on migration.

This is an important result, because in illness-death models the
assumption of absence of migration is often made. In our framework
this restriction is not necessary. Even if there is migration, but
the prevalence in the migrants is the same as in the resident
population, then the prevalence is not affected by migration.

\bigskip

The solution of Eq. \eqref{eq:Brinks2} can be obtained by the
methods of characteristics \cite{Pol02}. Let an initial condition
of the form $p(a, 0) = p_0(a)$ be given, then we have a so called
Cauchy problem, which has a unique solution if the right-hand side
of the PDE is sufficiently smooth \cite{Pol02}. This solution is
calculated as follows. Assume, the prevalence for those aged
$\tilde{a}$ in year $\tilde{t}$ has to be calculated. First,
rearrange $( \tfrac{\partial} {\partial t} +
\tfrac{\partial}{\partial a} ) p$ such that
\begin{equation*}
\left ( \frac{\partial}{\partial a} + \frac{\partial}{\partial t}
\right ) p = \alpha(t, a) + \beta(t, a) ~p + \gamma(t, a) ~p^2.
\end{equation*}

Second, solve the initial value problem given by following Riccati
ODE:
\begin{equation}\label{eq:PDE2ODE}
    \frac{\mathrm{d}y(\tau)}{\mathrm{d}\tau}
    = \alpha(\tau + a_0, \tau)
      + \beta(\tau + a_0, \tau) ~y
      + \gamma(\tau + a_0, \tau) ~y^2,
\end{equation}
and initial value $y(0) = p_0(a_0)$ where $a_0 := \tilde{a} -
\tilde{t}.$ Then, an easy calculation shows that $y(\tilde{t}) =
p(\tilde{t}, \tilde{a})$ is the desired value.


\section{Application on a simulated register}
In this section, the application of the above-formulated Cauchy
problem on a simulated register of a chronic disease is shown.
Since the disease is assumed to be irreversible, it holds $r
\equiv 0.$ First, we address a direct problem: From given
age-specific prevalence in some point in time $t_0$ we want to
deduce the age-specific prevalence in $t_1, ~t_1
> t_0,$ by applying Eq. \eqref{eq:Brinks2} with $\mu \equiv 0$.
Second, an inverse problem is formulated. Assume in the year $t_0$
the functions $p_0 = p (t_0, \cdot)$, $i(t_0, \cdot)$, $m_0 (t_0,
\cdot)$ and $m_1(t_0, \cdot)$ were measured. If in year $t_1, ~
t_1 > t_0,$ the age profile of the prevalence $p(t_1, \cdot)$ is
given, the question arises: how has the course of the age-specific
incidence changed in the meantime? This is an inverse problem,
because we infer from the effect (prevalence in $t_1$) on the
causes. Here we will formulate a simple, straight forward solution
by an optimization approach.

Both problems will be treated based on data of a simulated
register. The register is designed such that in a period of 150
years all persons are tracked from birth to death. For each
person, the date of an eventual diagnosis of the chronic disease
is recorded. For the simulation, the following assumptions are
placed as a basis:

\begin{enumerate}
   \item In each calendar year 0 to 150 2,000 people are born. The
      births during the calendar year follow a uniform distribution.
   \item The mortality of the non-diseased persons is of Strehler-Mildvan type
      and is given by the equation $$m_0 (t, a) = \exp (-10.7 + 0.1 a) \cdot
      (1-0.002)^{(t-20)_+}.$$ The notation $(t-20)_+$ denotes the
      positive component of the expression $(t-20)$. The exponential
      term approximates the current mortality of men in Germany, the
      second factor takes the increasing life expectancy into account.
   \item The incidence is described by the equation
   \begin{equation}\label{eq:inc}
   i(t, a) = \frac{(a - 30)_+}{3000} \cdot 0.99^{(t-50)_+}.
   \end{equation}
   \item The relative risk of death is constant for all ages and times:
      $$R (t, a) = \frac{m_1 (t, a)}{m_0 (t, a)} = 2.$$
\end{enumerate}

\bigskip

\noindent After the simulation, each person in the register is
represented by four pieces of information:
\begin{enumerate}
  \item[1)] A unique identification number (an integer),
  \item[2)] Calendar year of birth,
  \item[3)] The person's age in years at diagnosis (0 if the person does not
  fall ill),
  \item[4)] Age of death of the person in years.
\end{enumerate}

Entries 2) - 4) in the register are given to three decimals, which
corresponds to a precision of one day. The identification number
of the person is an ongoing counter. The date of birth (in
calendar years) is given by the simulated year, the decimals are
drawn from a uniform distribution $[0, 1).$ To decide if a thus
far non-diseased person born in year $\tau$ becomes ill or dies
without the disease, a competing risk approach in a discrete event
simulation (DES) is accomplished. Based on the cumulative
distribution function of the common risk (total intensity
$i(\tau+a, a) + m_0 (\tau+a, a)$), the age $a_0$ of event is drawn
by the inverse transform sampling (inversion method,
\cite{Gil11}). Based on a comparison between $i (\tau + a_0, a_0)$
and $m_0 (\tau + a_0, a_0),$ it is decided whether the onset of
the disease or the death without disease occurred. In the first
case $a_0$ represents the age at disease's onset, in the second
case, $a_0$ is the age of death. If the person gets the disease,
the age of death is simulated (conditional on reaching the age
$a_0$).

As in the calendar years 0 to 150 exactly 2000 people are born
every year, the hypothetical register contains $151 \cdot 2000 =
302,000$ persons. Then, the events of the hypothetical register
are transformed into a Lexis diagram of five years intervals
\cite{Lex03}. This allows an easy extraction of the
person-years and the numbers of events in the corresponding age-
and period classes.

\bigskip

In both test cases, in the direct and the inverse problem, we
assume information to be given only in two points in time, $t_0$
and $t_1$. Of course, three or more points in time would be
advantageous, but with respect to applicability in epidemiological
contexts, the test problems try to mimic a minimalistic
setting.

\subsection{Direct problem}\label{ss:direct}
Assume we have measured the age profile $p_0 = p (t_0, \cdot)$ of
the prevalence in $t_0$, and the age-specific incidence $i(t_0,
\cdot)$ and mortality densities $m_0 (t_0, \cdot)$ and $m_1(t_0,
\cdot)$. Furthermore, at a later point in time $t_1 > t_0$ let the
age-specific rates $i(t_1, \cdot)$ and mortalities $m_0(t_1,
\cdot)$ and $m_1 (t_1, \cdot)$ be given. The direct problem refers
to the question: what can be said about the age-specific
prevalence $p(t_1, \cdot)$ in $t_1$?

To answer this question, age-specific incidence and mortality
rates at two time points $t_0 = 120$ and $t_1 = 140$ (years) are
extracted from the register. In addition, the age-specific
prevalence $p(t_0, \cdot)$ is collected at $t_0$. Figure
\ref{fig:inc} shows the extracted age-specific incidence density
(dashed lines) in $t_0$ (red) and $t_1$ (blue) in comparison with
the theoretical values (solid lines).

\begin{figure}[ht]
  \centering
  \includegraphics[width=0.75\textwidth]{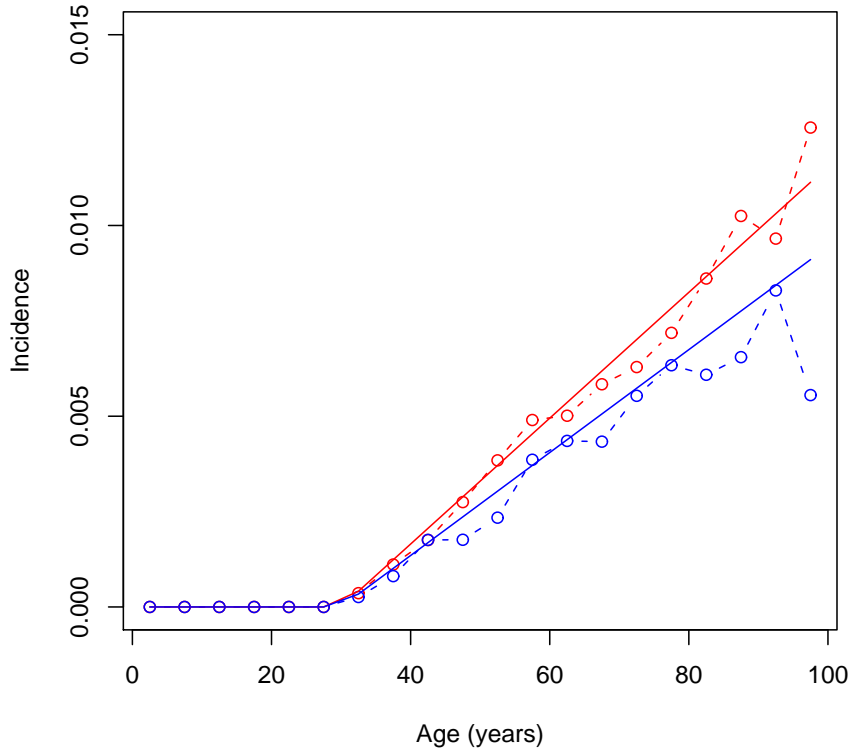}
\caption{Age-specific incidence density extracted from the
register (dashed lines) at $t_0 = 120$ (red) and $t_1 = 140$
(blue) in comparison with the theoretical values (solid lines).}
\label{fig:inc}
\end{figure}

Now consider the Cauchy problem that is given by Eq.
\eqref{eq:Brinks2} with the initial condition $p (t_0, \cdot) =
p_0$. For the solution one needs the functions $i(t, \cdot)$, $m_0
(t, \cdot)$ and $m_1(t, \cdot)$ for all time points $t$ between
$t_0$ and $t_1$. For this, the function values are interpolated
affine-linearly. The initial value problem Eq. \eqref{eq:PDE2ODE}
is solved numerically using the MATLAB\footnote{The MathWorks,
Natick, Massachusetts, USA} function \verb"ode45".

If we compare the numerical solution of the Cauchy problem in year
$t_1 = 140$ with the actually observed prevalence in the year 140,
one gets the result as shown in Figure \ref{fig:DirectSolution}.

\begin{figure}[ht]
  \centering
  \includegraphics[width=0.75\textwidth]{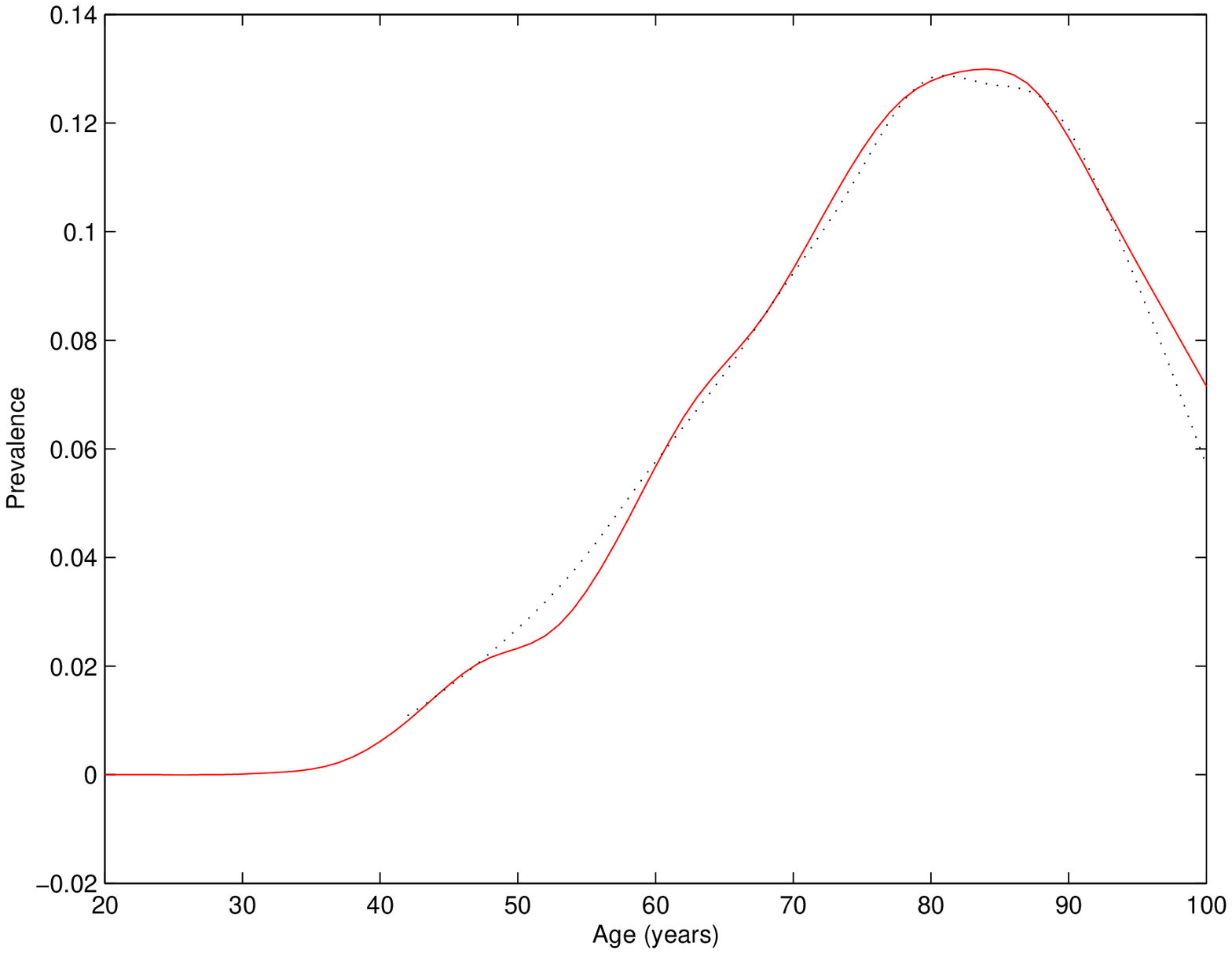}
\caption{Numerical solution of the direct problem (dashed line)
compared to the observed prevalence in year $t_1 = 140$ (solid
line).} \label{fig:DirectSolution}
\end{figure}

Visually this gives a fairly good agreement between the predicted
curve with the actually observed age-specific prevalence. The
maximum absolute deviation is 0.0146, which means that in this
example the prevalence can be predicted up to 1.5 percent points.
The largest deviation is in the oldest age class, when we have
only a few cases of the disease.

\subsection{Inverse problem}
In epidemiological studies, it is more laborious to measure
incidence rates than prevalences. Hence, in practice, the
following inverse problem is much more important than the direct
problem of the previous section. Assume in the year $t_0 = 120$
the functions $p_0 = p (t_0, \cdot)$, $i(t_0, \cdot)$, $m_0(t_0,
\cdot)$ and $m_1 (t_0, \cdot)$ are known. Moreover, in year $t_1 =
140$ let the age profile of the prevalence $p(t_1, \cdot)$ be
given. The functions $m_0 (t_1, \cdot)$ and $m_1(t_1, \cdot)$ are
also assumed to be known (for example from other epidemiological
studies). The question then is, how well the incidence $i(t_1,
\cdot)$ in the year $t_1$ can be derived from this information.
For simplicity, we assume that the incidence of $i(t_1, \cdot)$ in
$t_1$ can be expressed as a product

\begin{equation}\label{eq:i1}
i (t_1, \cdot) = i (t_0, \cdot) \cdot (1 - h),
\end{equation}

where $h \in [0, 1].$ The upper limit for $h$ stems from the fact,
that incidence rates are non-negative. The lower limit reflects
the prior knowledge, that incidence has not increased in $t_1$
compared to $t_0$: $i(t_1, a) \le i (t_0, a),$ for all $a \in [0,
\infty).$ Equation \eqref{eq:i1} corresponds to a proportional
hazards approach, which is used widely in epidemiology.

To solve this inverse problem, we formulate an optimization
problem. For given $h \in [0, 1]$ and $i(t_0, \cdot)$ by Eq.
\eqref{eq:i1} the function $i(t_1, \cdot)$ is defined. If
furthermore $p (t_0, \cdot)$, $m_0(t_0, \cdot), ~m_0(t_1, \cdot),
m_1(t_0, \cdot)$ and $m_1 (t_1, \cdot)$ are known, then we are in
the situation to calculate a unique function $\hat{p}_h(t_1,
\cdot)$ by solving the Cauchy problem described in the previous
subsection \ref{ss:direct}. The solution $\hat{p}_h(t_1, \cdot)$
of the direct problem depends on $h.$ We can compare
$\hat{p}_h(t_1, \cdot)$ with the measured prevalence $p(t_1,
\cdot)$ in the register. Thus, we seek for $h^\ast \in [0, 1]$
that minimizes the Euclidean distance between $\hat{p}_h(t_1,
\cdot)$ and $p(t_1, \cdot):$

\begin{equation}\label{eq:Euclid}
h^\ast = \arg \min_{h \in (0, 1)} \int_{A_{t_1}} | \hat{p}_h(t_1,
a) - p(t_1, a)|^2 \mathrm{d}a,
\end{equation}
where $A_{t_1} = \{ a \in [0, \infty) ~| (t_1, a) \in D\}$.

Figure \ref{fig:3} shows the Euclidean distance between the
prevalence $p(t_1, \cdot)$ in the register and the solution
$\hat{p}_h(t_1, \cdot)$ as a function of $h$.

\begin{figure}[ht]
  \centering
  \includegraphics[width=0.75\textwidth]{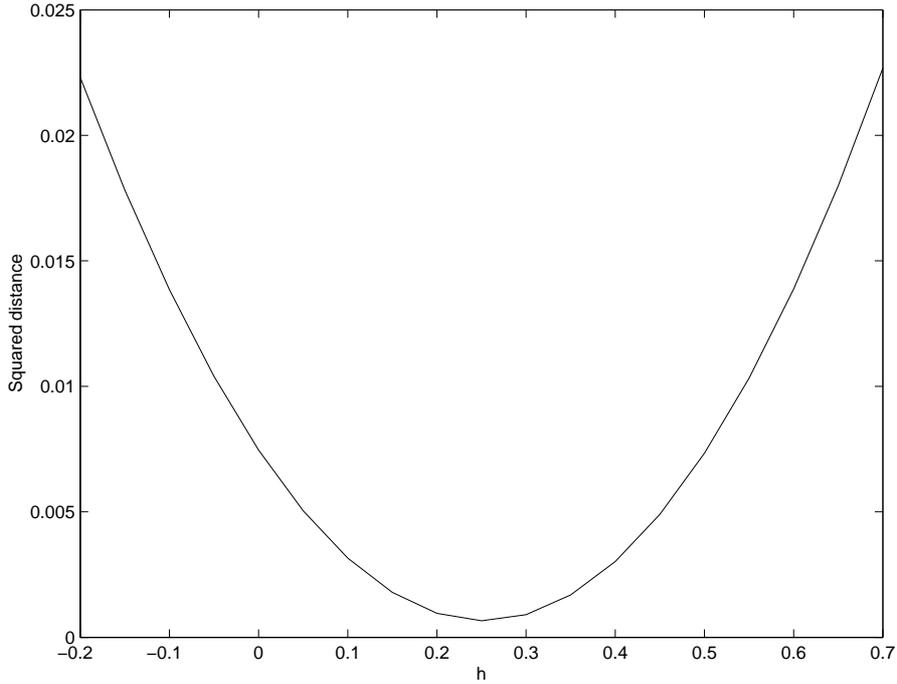}
\caption{Euclidean distance (as in Eq. \eqref{eq:Euclid}) as a function
of $h$. There is a unique minimum $h^\ast = 0.25$.} \label{fig:3}
\end{figure}

From the graph in Figure \ref{fig:3} it is obvious that the square
of the distance is minimized at about $h^\ast = 0.25$. Since from
the 50th calendar year the incidence decreases by 1\% per year and
a period of 20 years was considered, a factor $1-h$ of about
$0.99^{20} = 0.82 = (1 - 0.18)$ is expected. The revealed value
$h^\ast = 0.25$ is about a factor of 1.4 too large.

\section{Discussion}
In this work we developed a new equation linking incidence-,
remission- and mortality-rates with prevalence of a disease. In
contrast to former works, the assumptions of stationary
populations, independence from calendar time and zero net
migration have been released. The new equation has a wide range
of applicability in epidemiological, health care and health
economic contexts.

However, it has several limitations. First, Eq. \eqref{eq:Brinks2}
needs the remission rate $r$ and mortality rate $m_1$ of the
diseased to be independent from the duration $d$ of the disease.
In real diseases independence from duration is only an
approximation. For many infectious diseases, immune response is
dependent on the time since onset of the disease. Also in chronic
diseases duration since onset plays a major role. For example, the
age- and sex-adjusted mortality due to coronary heart disease
roughly doubles for each 10-year increase in diabetes duration.
The all-cause mortality increases by a factor of 1.2 per 10-year
duration, \cite{Fox04}.

Second, although the new equation is not limited to the case $\mu
\equiv 0$, in practical applications information about the health
of immigrants \emph{and} emigrants is seldom obtainable. By
Proposition \ref{prop:mu} reasonable knowledge of prevalence in
all migrants is necessary to accurately treat the case $\mu \neq
0$. To give an example, countries with large-scale immigration
programs such as Canada observe a so-called \emph{healthy
immigrant effect} with respect to chronic diseases: immigrants are
healthier than residents, \cite{McD04}. Assumed that the emigrants
from Canada have the same prevalence as the residents, it would
follow $\mu \neq 0.$ However, surveys about the health status of
emigrants are missing. The reason is obvious, Canada's
taxpayer-funded health care system is interested in measuring
health of those who immigrate, but not in those who emigrate.
Hence, information is lacking and assumptions have to be made.

Third, Eq. \eqref{eq:Brinks2} only considers prevalence in
migrants at the moment of emigration or immigration. Of course,
large scale immigration is likely to change the incidence of the
disease in the population, because immigrants' health adapts to
the new environment. There are many examples where
immigrants from the developing countries increase incidence of
diabetes and related complications when adopting westernized
lifestyle, \cite{Mis07}. The opposite may also be true, in Canada
immigrants continue to have a lower relative risk of chronic
conditions compared to the native-born, even many years after
immigration, \cite{McD04}.

\bigskip

Beside theoretical considerations, we use the new equation in a
simulated register of a hypothetical chronic disease. The register
has been simulated by Monte Carlo techniques and has been analyzed
by a numerical implementation of the new equation. To check the
practical applicability of the analysis, the simulation and the
analysis have been strictly separated, i.e. neither was the PDE
used in simulating the register, nor was information other than
explicitly mentioned, used as input for the simulation exploited
in the analysis. The PDE has only been used in the analysis of the
direct and inverse problem. In the direct problem, the prevalence
at the later point in time $t_1$ could be predicted from the
prevalence in $t_0$ twenty years earlier with a high accuracy. Of
course, the obtained accuracy is a result of the structure
inherent to the simulation. The solution of both, direct and
inverse problem, uses an affine-linear interpolation for the
incidence- and mortality rates between $t_0$ and $t_1$. In the
simulated register this works well, because it reflects the trends
in the incidence and mortalities. Affine-linear interpolation will
impose problems if the incidence trend turns around between $t_0$
and $t_1$. An example for a change of trends can be found in
\cite{Car08}: from 1995 to 2004 incidence of diabetes is found to
be rising with an average of 5.3\% per year in all age classes,
and from 2005 to 2007 incidence is declining with 3.1\% per year.

In the inverse problem, the incidence in $t_1$ was reconstructed
from the observed prevalence in $t_1$. Provided that the
right-hand side of Eq. \eqref{eq:Brinks2} is sufficiently smooth,
existence of $h^\ast \in [0, 1]$ follows from the continuous
dependency of the solution of the Cauchy problem on $h$ from the
compact interval $[0, 1]$. Continuous dependency can be seen by
noting that the solution constructed by the methods of
characteristics inherits its smoothness properties from the
smoothness of the right-hand side of Eq. \eqref{eq:PDE2ODE},
\cite{Hal80}. The question remains why the result (in terms of
$h^\ast$) is about a factor 1.4 too large. The approach in solving
the inverse problem is the proportional hazards assumption Eq.
\eqref{eq:i1}. Indeed, the simulation considers a decline of
exponential type, see Eq. \eqref{eq:inc}. Although the exponential
in this case can approximated by an affine-linear interpolation
function quite well, it appears that the solution of the inverse
problem reacts quite sensitively on inaccuracies. This is in line
with our observation, that the inverse problem is ill-posed
\cite{Bri12}.

\end{document}